\newtheorem{definition}{Definition}[section]
\newtheorem{proposition}[definition]{Proposition}
\newcommand{\greydot}{\begin{picture}(20,20)
  \put(0,0){\circle{14}}
  \put(-5,-5){\line(1,1){10}}
  \put(-5,5){\line(1,-1){10}}
  \end{picture}}
\newcommand{\vertexbn}{\begin{picture}(20,20)
  \put(6,-3){\line(1,0){30}}\put(6,3){\line(1,0){30}}
  \put(27,0){\line(-1,1){10}}\put(27,0){\line(-1,-1){10}}
  \end{picture}}
\newcommand{\vertexcn}{\begin{picture}(20,20)
  \put(6,-3){\line(1,0){30}}\put(6,3){\line(1,0){30}}
  \put(17,0){\line(1,-1){10}}\put(17,0){\line(1,1){10}}
  \end{picture}}
\newcommand{\vertexdn}{\begin{picture}(20,40)
  \put(5,5){\line(3,2){19}} \put(5,-5){\line(3,-2){19}}
  \put(31,20){\circle{14}}\put(31,-20){\circle{14}}
  \end{picture}}
\newcommand{\vertexvn}{\begin{picture}(20,40)
  \put(6,18){\line(3,-2){19}} \put(6,-18){\line(3,2){19}}
  \put(-1,20){\circle{14}}\put(-1,-20){\circle{14}}
  \end{picture}}
\begin{document}
\title{Vogan diagrams of  affine twisted Lie superalgebras }
\author{Biswajit Ransingh}
\maketitle
\address{\begin{center}
Department of Mathematics\\National Institute of Technology\\Rourkela (India)\\email- bransingh@gmail.com  
\end{center}
\date{}

\begin{abstract}
A Vogan diagram is a Dynkin diagram with a Cartan involution of twisted affine superlagebras based on maximally
compact Cartan subalgebras. This article construct the Vogan diagrams of twisted affine  superalgebras. This article is
a part of completion of classification of vogan diagrams to  superalgebras cases.
\end{abstract}
\noindent 2010 AMS Subject Classification : 17B05, 17B22, 17B40

\section{Introduction}
Recent study of denominator identity of Lie superalgebra by Kac and et.al \cite{kac:denominator} shows a number of application to number theory,
Vaccume modules and W algebras. 
We loud denominator identiy because it has  direct linked to real form of Lie superalgebra and the primary ingredient  of Vogan
diagram is classification of real forms. It follows the study of twisted affine Lie superalgebra for similar application. Hence it is essential 
to roam inside the depth on  Vogan diagram of twisted affine Lie superalgebras.

The real form of  Lie superalgebra have a wider application not only in mathematics but also in theoretical physics.
Classification of real form is always an important aspect of Lie superalgebras. There are two methods
to classify the real form one is Satake or Tits-Satake diagram other one is Vogan diagrams. The former 
is based on the technique of maximally non compact Cartan subalgebras and later is based on maximally 
compact Cartan subalgebras. The Vogan diagram first introduced by A W Knapp to classifies the real 
form of semisimple Lie algebras and  it is named after David Vogan. Since then the classification of
Vogan diagram by different authors for affine Kac-Moody algebras (untwisted and twised), hyperbolic Kac-Moody algebras ,
Lie superalgebras and affine untwisted Lie superalgebras already developed. In this article we will developed
Vogan diagrams of the rest superalgebras, twisted affine Lie superalgebras.

The classification of symmetric spaces by Satake diagram had been done in \cite{helgason:symmetric}. Similar classification of symmetric spaces had achieved 
with double Vogan diagram by Chuah in \cite{double}. Recently the study of Kac-Moody symmetric spaces obtained by Freyn \cite{Freyn}. We hope an anologus 
classification of Kac-Moody symmetric superspaces can be obtained by Vogan graph theoretical method. So the exploration
of Vogan digram of twisted Lie superalgebra is a preliminary step towards  Kac-Moody symmetric superspaces.
\section{Generalities}
The following preliminary section deals with basic structure of Lie superalgebras which is helpfull for twisted affine extension 
of  Lie superalgebras in the subsequent section.
\subsection{The general  linear Lie superalgebras}

Let $V=V_{\overline{0}}\oplus V_{\overline{1}}$ be a vector superspace,
so that End$(V)$ is an associative superalgebra. The End$(V)$ with 
the supercummutator forms a Lie superalgebra, called the general linear
Lie superalgebra and is denoted by $\mathfrak{gl}(m|n)$, where $V=\mathbb{C}^{m|n}$.
With respect to an suitable ordered basis of End$(V)$ , $\mathfrak{gl}(m|n)$
can be realized as $(m+n)\times(m+n)$ complex matrices of the block form.

$\left(\begin{array}{cc}
a & b\\
c & d
\end{array}\right)$ where $a$ , $b$, $c$ and $d$ are respectivily $m\times m$, $m\times n$,
$n\times m$ and $n\times n$ matrices. The even subalgebra of $\mathfrak{gl}(m|n)$
is $\mathfrak{gl}(m)\oplus\mathfrak{gl}(n)$ , which consists of matrices
of the form $\left(\begin{array}{cc}
a & 0\\
0 & d
\end{array}\right)$, While the odd subspace consists of 
$\left(\begin{array}{cc}
0 & b\\
c & 0
\end{array}\right)$

\begin{definition}{\em
A Lie superalgebras $\mathcal{G}$ is an algebra graded over $\mathbb{Z}_{2}$ , i.e., $\mathcal{G}$
is a direct sum of vector spaces $\mathcal{G}=\mathcal{G}_{\overline{0}}\oplus \mathcal{G}_{\overline{1}}$, 
and such that the bracket satisfies 

\begin{enumerate} 
\item $[\mathcal{G}_{i},
\mathcal{G}_{j}]\subset\mathcal{G} _{i+j(\mbox{ mod }2)}$,
\item $[x,y]=-(-1)^{|x||y|}[y,x]$, (Skew supersymmetry)
$\forall$ homogenous $x$ and $y\in \mathcal{G}$ (Super Jacobi identity)
\item $[x,[y,z]]=[[x,y],z]+\left(-1\right)^{|x||y|}[y,[x,z]]\forall z\in \mathcal{G}$
\end{enumerate} 

A bilinear form $(.,.):\mathcal{G}\times\mathcal{G}\rightarrow\mathbb{C}$ on a Lie superalgebra is 
called \textbf{invariant} if $([x,y],z)=(x,[y,z])$, for all $x,y,z\in \mathcal{G}$  

The Lie superalgebra $\mathcal{G}$ has a root space decomposition with respect to $\mathfrak{h}$ 
\begin{displaymath} \mathcal{G}=\mathfrak{h}\oplus\bigoplus_{\alpha\in\triangle}\mathcal{G}_\alpha \end{displaymath}}
\end{definition}
A root $\alpha$ is even if $\mathcal{G}_{\alpha}\subset\mathcal{G}_{\overline{0}}$
and it is odd if $\mathcal{G}_{\alpha}\subset\mathcal{G}_{\overline{1}}$ 

A \textit{Cartan subalgebra} $\mathfrak{h}$ of diagonal matrices
of $\mathcal{G}$ is defined to be a Cartan subalgebra of the even
subalgebra $\mathcal{G}_{\overline{0}}$. Since every inner automorphism
of $\mathcal{G}_{\overline{0}}$ extends to one of Lie superalgebra
$\mathfrak{g}$ and Cartan subalgebras of $\mathcal{G}_{\overline{0}}$
are conjugate under inner automorphisms. So the Cartan subalgebras
of $\mathcal{G}$ are conjugate under inner automorphism.

\section{Realization of twisted Affine Lie superalgebras}

Let $\mathcal{G}$ be a basic simple Lie superalgebra with non degenerate invariant bilinear form $(.,.)$ and $\sigma$
an automorphism of finite order $m>1$. The eigenvalues of $\sigma$ are of the form $e^{\frac{2\pi k i}{m}}$,
$k\in\mathbb{Z}_m$ and hence admits the following $\mathbb{Z}_m$ grading:
 
\begin{equation}
 \mathcal{G}=\overset{m-1}{\underset{k=0}{\bigoplus}}\mathcal{G}_{k}, m\geq2
\end{equation}
such that 
\begin{equation}
 [\mathcal{G}_{i},\mathcal{G}_{j}]\subset\mathcal{G}_{i+j},\, i+j=i+j(\mbox{ mod } m)
\end{equation}
and
\begin{equation}
 \mathcal{G}_{k}= (\mathcal{G}_{k})_{\bar0}\oplus(\mathcal{G}_{k})_{\bar1}
\end{equation}
\begin{equation}
 \mathcal{G}_{k}=\{x\in \mathcal{G}|\sigma(x)=e^{\frac{2\pi k i}{m}.x}\}
\end{equation}
The twisted affine Lie superalgebra is defined to be 
\begin{equation}
\mathcal{G}^{(m)}=\left(\underset{k\in\mathbb{Z}_{m}}{\bigoplus}\mathbb{C}t^{k}\otimes\mathcal{G}_{k(\mbox{mod}m)}\right)\oplus\mathbb{C}c\oplus\mathbb{C}d
\end{equation}
The Lie superalgebra structure on $\mathcal{G}^{(m)}$ is such that $c$ is the canonical central element and
\begin{equation}
[x\otimes t^{m}+\lambda d, y\otimes t^{n}+\lambda_{1} d]=([x,y]\otimes t^{m+n}+\lambda ny\otimes t^{n}-\lambda_{1}mx\otimes t^{m}
+m\delta_{m,-n}(x,y)c
\end{equation}
where $x,y\in \mathcal{G}^{(m)}$ and $\lambda,\lambda_{1}\in \mathbb{C}$.
The element $d$ acts diagonally on $\mathcal{G}$ with interger eigenvalues and induces $\mathbb{Z}$ gradation.

\subsection{Cartan Involution}
Let $\mathfrak{g}$ is a compact Lie algebra if the group Int$\mathfrak{g}$ is compact.
An involution $\theta$ of a real semisimple Lie algebra $\mathfrak{g}_0$ such that symmetric bilinear form 
\begin{equation}
B_{\theta}(X,Y)=-B(X,\theta Y) 
\end{equation} is positive definite is called a Cartan involution.

\subsubsection{Cartan Involution of Contragradient Lie superalgebras}

$B$ is the supersymmeytic nondegenerate invariant bilinear form on $\mathcal{G}$
define \[ B_{\theta}(X,Y)=B(X,\theta Y)\]
We say that a real form of $\mathcal{G}$ has Cartan automorphism $\theta\in$aut$_{2,4}(\mathcal{G})$
if $B$ restricts to the Killing form on $\mathcal{G}_0$ and $B_\theta$ is symmetric negative definite on 
$\mathcal{G}^{(m)}$.

The bilinear form $(.,.)$ on $\mathcal{G}$ gives rise to a nondegenerate symmetric invariant form on $\mathcal{G}^{(m)}$ by
\begin{equation}
 B^{(m)}(\mathbb{C}[t,t^{-1}]\otimes\mathcal{G},\mathbb{C}K\oplus\mathbb{C}d)=0\end{equation}
\begin{equation}
\implies B^{(m)}(\underset{j\in\mathbb{Z}}{\bigoplus}t^{j}\otimes \mathcal{G}(\sigma)_{j mod m},\mathbb{C}K\oplus\mathbb{C}d)=0
\end{equation}
\begin{equation}
B^{(m)}(t^{j}\otimes X,t^{k}\otimes Y)=\lambda\delta^{j+k,0}B(X,Y)\end{equation}
\begin{equation}
 B^{(m)}(t^{j}\otimes X,K)=B^{(m)}(t^{j}\otimes X,d)=B^{(m)}(c,c)=B^{(m)}(d,d)=0\end{equation}
 \begin{equation}
  B^{m}(c,d)=1
 \end{equation}

\begin{proposition}

 Let $\theta\in$$aut$$_{2,4}(\mathcal{G}^{(m)})$. There exists a real form $\mathcal{G}^{(m)}_{\mathbb{R}}$
 such that $\theta$ restricts to a Cartan automorphism on $\mathcal{G}^{(m)}_{\mathbb{R}}$.
\end{proposition}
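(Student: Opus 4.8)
The plan is to follow the classical template (Knapp, \emph{Lie Groups Beyond an Introduction}) adapted to the twisted affine super setting: produce a distinguished \emph{compact} conjugation $\tau$ of $\mathcal{G}^{(m)}$, show after an inner adjustment that $\theta$ commutes with $\tau$, and then read off the desired real form as the fixed-point set of the antilinear involution $\theta\tau$. The guiding principle is that an order $2$ (or $4$) complex-linear automorphism $\theta$ together with an antilinear compact conjugation $\tau$ that commute produce a conjugation $\zeta=\theta\tau$ whose fixed-point algebra $\mathcal{G}^{(m)}_{\mathbb{R}}=\{x:\zeta(x)=x\}$ is a real form on which $\theta$ restricts to a Cartan automorphism, the $\pm1$-eigenspaces of $\theta$ giving the maximal compact subalgebra and its complement.

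First I would construct the compact conjugation. Since $\mathcal{G}$ is basic simple with nondegenerate invariant form $B$, it carries a compact conjugation $\omega_0$ for which $B_{\omega_0}$ is definite on the even part; because $\sigma$ has finite order $m$ one may choose $\omega_0$ commuting with $\sigma$ and preserving the grading $\mathcal{G}=\bigoplus_k\mathcal{G}_k$. I then extend $\omega_0$ to $\mathcal{G}^{(m)}$ by $\tau(t^j\otimes X)=t^{-j}\otimes\omega_0(X)$, $\tau(c)=-c$, $\tau(d)=-d$, composed with complex conjugation of scalars, which is the standard recipe for the compact form of an affine algebra. The inversion $t^j\mapsto t^{-j}$ is forced by the pairing $B^{(m)}(t^j\otimes X,t^k\otimes Y)=\lambda\delta^{j+k,0}B(X,Y)$ recorded above: it is precisely this that makes $B_\tau$ definite on each loop space, while $\tau(c)=-c,\ \tau(d)=-d$ matches the normalization $B^{(m)}(c,d)=1$. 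A short calculation using invariance of $B^{(m)}$ shows $\tau$ is an antilinear involutive automorphism.

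The crux is the alignment step: replacing $\theta$ by an inner-conjugate that commutes with $\tau$. Here I would form $P=(\theta\tau)^2=\theta\tau\theta\tau$, which is complex-linear and, with respect to the definite Hermitian inner product $\langle X,Y\rangle=-B^{(m)}(X,\tau Y)$ attached to the compact form, is self-adjoint with positive spectrum; hence its real powers $P^s$ are automorphisms, and $\theta_1=P^{1/4}\theta P^{-1/4}$ satisfies $\theta_1\tau=\tau\theta_1$ by the usual computation. Carrying this out rigorously is the main obstacle, for two reasons intrinsic to the present setting. The functional calculus must be controlled on the infinite-dimensional $\mathcal{G}^{(m)}$ graded by the $d$-eigenvalues, so one argues degree-by-degree, diagonalizing $P$ on each finite-dimensional graded piece before assembling $P^s$; and on the odd part $B$ is not definite, so the positivity of $P$ and the negative definiteness of $B_\theta$ must be read in the $\mathbb{Z}_2$-graded sense of the definition above, with the directions $\mathbb{C}c\oplus\mathbb{C}d$, where $B^{(m)}$ is hyperbolic rather than definite, handled separately. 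The order-$4$ possibility in $aut_{2,4}(\mathcal{G}^{(m)})$ enters here too: when $\theta^2$ is the canonical parity automorphism rather than the identity, one checks that $\zeta=\theta_1\tau$ still squares to the identity because $\tau$ commutes with parity, so $\zeta$ remains a genuine conjugation.

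Finally, with $\theta$ (renamed from $\theta_1$) commuting with $\tau$, I would set $\zeta=\theta\tau=\tau\theta$, an antilinear involution, and define $\mathcal{G}^{(m)}_{\mathbb{R}}=\{x:\zeta(x)=x\}$. Decomposing $\mathcal{G}^{(m)}_{\mathbb{R}}=\mathfrak{k}\oplus\mathfrak{p}$ into the $+1$ and $-1$ eigenspaces of $\theta$, the compact form $\{x:\tau(x)=x\}$ equals $\mathfrak{k}\oplus i\mathfrak{p}$, on which $B_\tau$ is definite; transporting this through $\theta$ shows that $B_\theta$ is symmetric and negative definite on $\mathcal{G}^{(m)}_{\mathbb{R}}$ in the graded sense, which is exactly the condition that $\theta$ be a Cartan automorphism. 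The remaining verifications, that $\zeta$ is an automorphism and that $\theta$ preserves $\mathcal{G}^{(m)}_{\mathbb{R}}$, are routine from $[\theta,\tau]=0$ and the invariance of $B^{(m)}$.
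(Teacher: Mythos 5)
Your overall route is genuinely different from the paper's. The paper never constructs a compact conjugation and never performs a polar-decomposition alignment: it takes the real form as already determined by the even part, invoking the classical correspondence for $\mathcal{G}_{\bar{0}\mathbb{R}}$ (via the chain of identifications $\mathcal{G}^{(m)}_{\mathbb{R}}\simeq\mathcal{G}^{(m)}_{\bar{0}\mathbb{R}}\simeq\mathcal{G}_{\bar{0}\mathbb{R}}$ and the Cartan decomposition $\mathcal{G}_{\bar{0}}=\mathfrak{k}_{0}\oplus\mathfrak{p}_{0}$), and then checks the Cartan condition only on the new directions by direct computation with $B^{(m)}$: the bracket identities $B_{\theta}(X,[Z,Y])=\mp B_{\theta}([Z,X],Y)$ according as $Z\in\mathfrak{k}_{0}$ or $Z\in\mathfrak{p}_{0}$, the vanishing of all pairings against $c$ and $d$, and the normalization $B_{\theta}(1\otimes X_{i},1\otimes X_{j})=\delta_{ij}$ yielding definiteness on $\mathcal{G}^{(m)}_{\bar{1}\mathbb{R}}$. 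Your proposal instead re-proves existence from scratch in the Knapp style, which, if it worked, would be more self-contained than the paper's verification; and your degree-by-degree handling of the functional calculus for $P=(\theta\tau)^{2}$ is the right instinct in the affine setting, since $P$ fixes $d$ and hence preserves each finite-dimensional graded piece.

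However, there is a concrete gap, and it sits exactly at the point you wave at. First, the compact conjugation $\tau$ you posit cannot exist as specified: on the odd part of a basic simple Lie superalgebra, no antilinear \emph{involutive} automorphism makes $B$ definite in the ungraded sense, because the invariant form pairs $\mathcal{G}_{\bar{1}}$ with itself in an odd/symplectic fashion; the genuinely compact conjugations of superalgebras (as in $\mathfrak{su}(m|n)$, built from the super-adjoint) square to the parity automorphism $\varpi$ on the odd part, not to the identity. This is precisely the structural reason the statement concerns $\mathrm{aut}_{2,4}$ rather than $\mathrm{aut}_{2}$. Second, your parity bookkeeping in the order-$4$ case is wrong as arithmetic: if $\theta^{2}=\varpi$, $\tau^{2}=\mathrm{id}$ and $[\theta,\tau]=0$, then $\zeta^{2}=\theta^{2}\tau^{2}=\varpi\neq\mathrm{id}$, and ``$\tau$ commutes with parity'' does not repair this. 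If instead one takes the correct graded conjugation with $\tau^{2}=\varpi$, the arithmetic reverses: $\zeta^{2}=\mathrm{id}$ precisely when $\theta^{2}=\varpi$, and the order-$2$ case then needs a different pairing. So the case-matching between the order of $\theta$ and the type of conjugation is the actual mathematical content of the existence statement in the super setting, and your proposal asserts it rather than proves it. Relatedly, the positivity and self-adjointness of $P$ require the Hermitian form $\langle X,Y\rangle=-B^{(m)}(X,\tau Y)$ to be definite, which again fails naively on the odd part and on $\mathbb{C}c\oplus\mathbb{C}d$; saying these will be ``handled separately'' and ``read in the graded sense'' defers the entire difficulty. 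To close the gap you would need to construct the graded compact conjugation commuting with $\sigma$ explicitly (it exists for basic classical $\mathcal{G}$), run the alignment with the correct $\tau^{2}=\varpi$ convention, and treat the two orders of $\theta$ as separate cases --- at which point your argument would recover, in a self-contained way, what the paper obtains by leaning on the even-part real form and on Chuah's framework.
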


\begin{proof}

 Since $\theta$ is an $\mathcal{G}^{(m)}$ automorphism, it preserves $B$. namely
 $$B^{(m)}(X,Y)=B^{(m)}(\theta X,\theta Y)$$
 $B^{(m)}_{\theta}(X,Y)=B^{(m)}_{\theta}(Y,X)$, 
 $B^{(m)}_{\theta}(X,\theta X)=0$
$$B^{(m)}_{\theta}(X\otimes t^{m},Y\otimes t^{n})=B^{(m)}_{\theta}(Y\otimes t^{n},X\otimes t^{m})=$$
$$=t^{m+n}B(X,Y)$$
 for all $X,Y\in\mathcal{G}_{0}$
 $$B^{(m)}(c,X\otimes t^{k})=B(d,X\otimes t^{k})=B^{(m)}(d,d)=B^{(m)}(c,c)=0$$
 For $z\in L(t,t^{-1})\otimes\mathcal{G}_0$ and $X,Y\in L(t,t^{-1})\otimes\mathcal{G}_1$ 
 $$B^{(m)}_{\theta}(X,[Z,Y])=B^{(m)}(X,[\theta Z,\theta Y])=-B^{(m)}_{\theta}(X,[\theta Z,\theta Y])$$
 $$B^{(m)}_{\theta}(X,[Z,Y])=0$$ $\forall$ $X\in\mathbb{C}c$ or $\mathbb{C}d$  
 
 $\mathcal{G}_{\mathbb{R}}^{(m)}\simeq\mathcal{G}_{\overline{0}\mathbb{R}}^{(m)}\simeq\mathcal{G}_{\overline{0}\mathbb{R}}$.
The above three real forms are isomorphic. So the Cartan decomposition of $\mathcal{G}_{\mathbb{R}}^{(m)}$
are isomorphic

to $\mathcal{G}_{\overline{0}}$. 

$\mathcal{G}_{\overline{0}}=\mathfrak{k}_{0}\oplus\mathfrak{p}_{0}$

$B_{\theta}(X,[Z,Y])=\begin{cases}
\begin{array}{cc}
-B_{\theta}([Z,X],Y) & \mbox{if }Z\in\mathfrak{k}_{0}\\
B_{\theta}([Z,X],Y) & \mbox{if }Z\in\mathfrak{p}_{0}
\end{array}\end{cases}$

We say that a real form of $\mathcal{G}$ has Cartan automorphism $\theta\in$aut$_{2,4}(\mathcal{G})$
if $B$ restricts to the Killing form on $\mathcal{G}_0$ and $B_\theta$ is symmetric negative definite on 
$\mathcal{G}_{\mathbb{R}}$ and $B_{\theta}$ is symmetric bilinear form on $\mathcal{G}_{1} =\{1\otimes X_{1},1\otimes X_{2},\cdots,c,d\}$.
 $B_{\theta}(1\otimes X_{i},1\otimes X_{j})=\delta_{ij}$. It follows that $B_{\theta}$ negative definite on $\mathcal{G}_{\bar{1}\mathbb{R}}^{(m)}$.
 So it is concluded that $\theta$ is a 
 Cartan automorphism on $\mathcal{G}^{(m)}$.
\end{proof}

\section{Vogan diagram}
Let $\mathfrak{g}_0$ be a real semisimple Lie algebra, Let  $\mathfrak{g}$
be its complexification, let $\theta$ be a Cartan involution, let 
$\mathfrak{g}_{0}=\mathfrak{k}_{0}\oplus\mathfrak{p}_{0}$ be the corresponding Cartan decomposition
A maximally compact $\theta$ stable Cartan subalgebra $\mathfrak{h}_{0}=\mathfrak{k}_{0}\oplus\mathfrak{p}_{0}$
of $\mathfrak{g}_0$ with complexification $\mathfrak{h}=\mathfrak{k}\oplus\mathfrak{p}$ and we let 
$\triangle=\triangle(\mathfrak{g},\mathfrak{h})$ be the set of roots.
Choose a positive system $\triangle^{+}$  for $\triangle$ that takes $i\mathfrak{t}_0$ before $\mathfrak{a}$.
$\theta(\triangle^{+})=\triangle^{+}$\\
$\theta(\mathfrak{h}_{0})=\mathfrak{k}_{0}\oplus(-1)\mathfrak{p}_{0}$. 
Therefore $\theta$ permutes the simple roots. It must fix the simple roots that are imaginary and permute
in 2-cycles the simple roots that are complex.
By the Vogan diagram of the triple $(\mathfrak{g}_{0},\mathfrak{h}_{0},\triangle^{+}))$., we mean the Dynkin diagram of 
$\triangle^{+}$ with the 2 element orbits under $\theta$ so labeled and with the 1-element orbits painted or not, according as
the corresponding imaginary simple root is noncompact or compact.

\section{Twisted Affine Lie superalgebras}
A Dynkin diagram of $\mathcal{G}^{(m)}$ is obtained by adding a lowest weight (root)  to the Dynkin diagram of  $\mathcal{G}$.

\subsection{Root systems}
We have mentioned the lowest root because it has the relation with Kac-Dynkin label. We can get canonical nontrivial Kac-Dynkin
labels by lowest root from the fundamental representation. 

The root systems of twisted affine Lie superalgebra $OSp(2m|2n)^{(2)}$ is given by
 
$$\triangle=\{\frac{k}{2}-\delta_{1},\delta_{1}-\delta_{2},\cdots,\delta_{n-1}-\delta_{n},\delta_{n}-e_{1},
e_{1}-e_{2},\cdots,e_{m-1}-e_{m},e_{m}\}$$
The $\mathcal{G}_{0}$ representation $\mathcal{G}_{1}$ is the fundamental representation of 
$Osp(2m-1|2n)$ whose lowest weight is $-\delta_{1}$. For root systems of twisted affine Lie superalgebra $OSp(2|2n)^{(2)}$,
there exist an automorphism $\tau$ such that the invariant subsuperalgebra $\mathcal{G}_{0}$
is $OSp(1|2n)$. The simple root system of $\mathcal{G}_{0}$ is 
  
$$\triangle=\{\delta_{1}-\delta_{2},\cdots,\delta_{n-1}-\delta_{n},\delta_{n}\}$$
The lowest weight of the $\mathcal{G}_{1}$ representation of $\mathcal{G}_{0}$
is $\delta_{1}$. Similarly for twisted affine Lie superalgebra $Sl(1|2n+1)^{(4)}$, we  know the invariant subalgebra
can be taken to as $O(2n+1)$ and the lowest weight is $-\delta_{1}$.

\section{Vogan diagrams of affine Lie superalgebras}

Let $c$ the circling of vertices , $d$ diagram involution, $a_{s}$ numerical labeling and $D$ Dynkin diagram of $\mathcal{G}^{(m)}$.
 $S$ is defined to be the set of $d$ orbit vertices.\cite{Chuah:finite}
 \begin{definition}
 A  Vogan diagram $(c,d)$ on $D$  and one of the following holds:
 \begin{itemize}
  \item[({\it i})] $d$ fixes grey vertices
  \item [({\it ii})]  $\sum_{S}a_\alpha$ is odd.
 
 \end{itemize}

\end{definition}
 The $\gamma$, $\delta$ and  $c$ are expressed in terms of the bases given as follows

$\gamma=\overset{}{\underset{i=1}{\overset{n}{\sum}}a_{i}\alpha_{i}}$ , $\delta=\underset{i=0}{\overset{n}{\sum}}a_{i}\alpha_{i}$

Fix a set $\pi$ of simple roots of $\mathcal{G}$ , we take $\hat{\pi}=\{\alpha_{0}=\delta-\gamma\}\cup\pi$
be the simple roots of $\mathcal{G}^{(m)}$ ($\gamma$ is the highest weight in $\triangle_{\overline{0}}^
{(1)}\cup\triangle_{1}^{(1)}$).

 If $\theta$ extend to $aut_{2,4}$ (automorphism of order 2 or 4) then $\theta$ permutes the extreme weight spaces $\mathcal{G}^{(m)}$.
 Since $\theta|_{\mathcal{G}_{0}}$
 is represented by $(c,d)$ on $D_{0}$ (even part (set of even roots) of the Dynkin diagram),
 it permutes the simple root spaces of $\mathcal{G}_{0}$. Hence $\theta$ permutes the lowest weight spaces
 of $\mathcal{G}^{(m)}$ and $d$ extend to $inv(\mathcal{G}^{(m)})$ (where $inv$ is involution on (.)).

\begin{proposition}
 Let $\mathcal{G}_{\mathbb{R}}$ be a real form, with Cartan involution $\theta\in $inv$(\mathcal{G}_{\mathbb{R}})$ and Vogan diagram
 $(c,d)$ of $D_{0}$. The following are equivalent
 \begin{itemize}
 \item[({\it i})]  $\theta$ extend to aut$_{2,4}(\mathcal{G}^{(m)})$.
  \item [({\it ii})] $(\mathcal{G}_{\bar{0}\mathbb{R}})$ extend to a real form of $\mathcal{G}^{(m)}$.
  \item [({\it iii})] $(c,d)$ extend to a Vogan diagram on $D$
 \end{itemize}
\end{proposition}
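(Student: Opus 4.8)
The plan is to establish the three-way equivalence by proving the cycle $(i)\Rightarrow(ii)\Rightarrow(iii)\Rightarrow(i)$, leaning throughout on the dictionary between Cartan involutions, real forms, and Vogan diagrams, and on the isomorphism chain $\mathcal{G}_{\mathbb{R}}^{(m)}\simeq\mathcal{G}_{\overline{0}\mathbb{R}}^{(m)}\simeq\mathcal{G}_{\overline{0}\mathbb{R}}$ recorded in the preceding Proposition.

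For $(i)\Rightarrow(ii)$, I would start from an extension $\theta\in\mathrm{aut}_{2,4}(\mathcal{G}^{(m)})$ and apply the preceding Proposition to produce a real form $\mathcal{G}_{\mathbb{R}}^{(m)}$ on which $\theta$ acts as a Cartan automorphism, i.e.\ with $B^{(m)}_{\theta}$ negative definite. Restricting the $\pm1$ eigenspace decomposition of $\theta$ to the degree-zero even part and invoking the isomorphism chain identifies $\mathcal{G}_{\overline{0}\mathbb{R}}$ with the degree-zero component of this real form, which is exactly the assertion that $\mathcal{G}_{\overline{0}\mathbb{R}}$ extends to a real form of $\mathcal{G}^{(m)}$.

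For $(ii)\Rightarrow(iii)$, given such an extended real form, the associated Cartan involution $\theta$ preserves $\triangle^{+}$ and hence permutes the simple root spaces of $\hat{\pi}=\{\alpha_{0}=\delta-\gamma\}\cup\pi$. Reading off this action yields a labeling on the full affine diagram $D$: $c$ circles the noncompact imaginary simple roots and $d$ marks the $2$-element $\theta$-orbits, now including the affine node $\alpha_{0}$. The substantive point is to confirm that this labeling satisfies the defining conditions of a Vogan diagram on $D$, namely that either $d$ fixes the grey vertices or $\sum_{S}a_{\alpha}$ is odd; this parity is forced by the requirement that $\theta$ have order $2$ or $4$, which constrains how the phase of $\theta$ distributes over the labels $a_{\alpha}$ attached to the orbit set $S$.

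For $(iii)\Rightarrow(i)$, I would reconstruct the involution directly from the diagram: $c$ prescribes the compact/noncompact signs on the Chevalley generators and $d$ prescribes the permutation of the simple root spaces, including $\alpha_{0}$, thereby defining an automorphism $\theta$ of $\mathcal{G}^{(m)}$. One then checks, as in the preceding Proposition, that $B^{(m)}_{\theta}$ is negative definite on the induced real form, so that $\theta$ is a Cartan automorphism, and that $\theta$ has order $2$ or $4$. The main obstacle lies precisely here in verifying the order: condition $(i)$ of the diagram definition ($d$ fixes grey vertices) is what keeps $\theta^{2}=1$, while condition $(ii)$ ($\sum_{S}a_{\alpha}$ odd) is what forces the fourth-root-of-unity phase to close up, giving $\theta^{4}=1$ with $\theta^{2}\neq1$. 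Because the superalgebra setting admits odd isotropic (grey) simple roots and exhibits genuinely order-$4$ automorphisms that are absent from the classical Knapp theory, the bookkeeping linking the parity of $\sum_{S}a_{\alpha}$, the label $a_{0}$ of the affine node, and the resulting order of $\theta$ is the delicate step and should be carried out case by case along the root systems listed above.
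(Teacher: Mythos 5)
Your overall architecture --- the cycle $(i)\Rightarrow(ii)\Rightarrow(iii)\Rightarrow(i)$, the observation that $\theta$ permutes the lowest weight spaces, and the reduction of everything to parity bookkeeping on the numerical labels --- is in outline the same route the paper takes: the paper fixes the orbit set $S$, records that the odd part of the affine diagram is either $D_{\bar{1}}=\{\gamma,\delta\}$ with both labels $1$ or $D_{\bar{1}}=\{\gamma\}$ with label $2$, notes that $\theta$ permutes the weight spaces $L(t,t^{-1})\otimes\mathcal{G}_{\bar{1}}$, and then hands the remaining verification to Proposition 2.2 of Chuah's paper on Cartan automorphisms and Vogan superdiagrams. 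So in spirit you and the paper follow the same strategy.

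The problem is that your attempt stops exactly where the content begins. In both $(ii)\Rightarrow(iii)$ and $(iii)\Rightarrow(i)$ the decisive step --- showing that a consistent scalar can be assigned to the affine generator attached to $\alpha_{0}=\delta-\gamma$ (whose image under $\theta$ is pinned down by the relation $\sum_{i}a_{i}\alpha_{i}=\delta$ and the scalars on the finite Chevalley generators) precisely when $d$ fixes the grey vertices or $\sum_{S}a_{\alpha}$ is odd --- is asserted (``this parity is forced by the requirement that $\theta$ have order $2$ or $4$'') and then explicitly deferred to an unexecuted ``case by case'' check. That computation \emph{is} the proposition; without it you have a plan, not a proof. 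Moreover, your gloss that condition ({\it i}) of the diagram definition ``keeps $\theta^{2}=1$'' while condition ({\it ii}) ``forces $\theta^{4}=1$ with $\theta^{2}\neq 1$'' is an unsupported guess: in the paper's definition (following Chuah) the two conditions are alternative hypotheses either of which makes $(c,d)$ a Vogan diagram on $D$, and order-$4$ automorphisms --- those squaring to the parity automorphism on the odd part --- are not segregated by which of the two conditions holds. Finally, in $(i)\Rightarrow(ii)$ the preceding proposition only yields existence of \emph{some} real form of $\mathcal{G}^{(m)}$ on which $\theta$ restricts to a Cartan automorphism; you still owe the identification of its degree-zero even part with the given $\mathcal{G}_{\bar{0}\mathbb{R}}$, not merely an isomorphic copy, which requires a conjugacy argument you do not supply.
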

\begin{proof}
  
$$S=$$\{vertices painted by p\}$$\cup$$\{white and adjacent 2-element d-orbits\}$$\cup$$\{grey and non adjacent 2-element d-orbits\}$$$$
 Let $D$ be the Dynkin diagram of $\mathcal{G}^{(m)})$ of simple root system
 $\Phi\cup\phi$($\Phi$ simple root system with $\phi$ lowest root) with $D=D_{\bar{0}}+D_{\bar{1}}$, where $D_{\bar {0}}$
 and $D_{\bar{1}}$ are respectively the white and grey vertices. The numerical label of the diagram shows $\sum_{\alpha\in D_{\bar{1}}}=2$ has either
 two grey vertices with label 1 or one grey vertex with label 2.
 
 \begin{itemize}
  \item[(i)] $D_{\bar{1}}=\{\gamma,\delta\}$ so the labelling of the odd vertices are 1.
  \item[(ii)] $D_{\bar{1}}=\{\gamma\}$ so labelling is 2 $( a_{\alpha}=2)$ on odd vertex.
 \end{itemize}
  $\theta\in $inv$(\mathcal{G}_{\mathbb{R}})$;  
 $\theta$ permutes the weightspaces $L(t,t^{-1})\otimes\mathcal{G}_{\bar{1}}$
 The rest part of proof of the proposition is followed the proof of the propostion 2.2 of \cite{chuah:vsuper}
 \end{proof}

When there is a $\sigma$ stable  compact Cartan subalgebra  then the 
Vogan diagrams are the following.

 The Vogan diagrams of   $\mathfrak{sl}(2m|2n)^{(2)}$ are
\begin{displaymath}
 \begin{picture}(80,20) \thicklines
     \put(-42,0){\circle{14}}   \put(0,0){\circle{14}} \put(42,0){\greydot} \put(84,0){\circle{14}} \put(126,0){\circle{14}}
  \put(-71,0){\vertexvn}
    \put(-35,0){\dottedline{4}(1,0)(28,0)}\put(7,0){\line(1,0){28}} \put(49,0){\line(1,0){28}}\put(91,0)
    {\dottedline{4}(1,0)(28,0)}
    \put(-71,0){\vertexvn}\put(126,0){\vertexbn}\put(168,0){\circle{14}}
             \put(-84,-20){\makebox(0,0){$1$}}
      \put(-84,20){\makebox(0,0){$1$}}
      \put(-42,14){\makebox(0,0){$2$}}
      \put(0,14){\makebox(0,0){$2$}} 
      \put(42,14){\makebox(0,0){$2$}} 
      \put(84,14){\makebox(0,0){$2$}}
       \put(126,14){\makebox(0,0){$2$}}
       \put(170,-20){\makebox(0,0){$1$}}
       \put(170,20){\makebox(0,0){$1$}}
         \qbezier(-93.5,0)(-92, -15)(-82, -23)
        \qbezier(-93.5,0)(-92, 15)(-82, 23)
\put(-82,23){\vector( 1, 1){0}}
\put(-82,-23){\vector( 1, -1){0}}
 \end{picture}
 \end{displaymath}
\vspace{1cm} 

\begin{displaymath}
 \begin{picture}(80,20) \thicklines
     \put(-42,0){\circle{14}}   \put(0,0){\circle{14}} \put(42,0){\greydot} \put(84,0){\circle{14}} \put(126,0){\circle{14}}
  \put(-71,0){\vertexvn}
    \put(-35,0){\dottedline{4}(1,0)(28,0)}\put(7,0){\line(1,0){28}} \put(49,0){\line(1,0){28}}\put(91,0)
    {\dottedline{4}(1,0)(28,0)}
    \put(-71,0){\vertexvn}\put(126,0){\vertexbn}\put(168,0){\circle{14}}
             \put(-84,-20){\makebox(0,0){$1$}}
      \put(-84,20){\makebox(0,0){$1$}}
      \put(-42,14){\makebox(0,0){$2$}}
      \put(0,14){\makebox(0,0){$2$}} 
      \put(42,14){\makebox(0,0){$2$}} 
      \put(84,14){\makebox(0,0){$2$}}
       \put(126,14){\makebox(0,0){$2$}}
       \put(170,-20){\makebox(0,0){$1$}}
       \put(170,20){\makebox(0,0){$1$}}
        
 \end{picture}
 \end{displaymath}
\vspace{1cm} 

\begin{displaymath}
 \begin{picture}(80,20) \thicklines
     \put(-42,0){\circle{14}}   \put(0,0){\circle{14}}\put(0,0){\circle{7}} \put(42,0){\greydot} \put(84,0){\circle{14}} \put(126,0){\circle{14}}\put(126,0){\circle{7}}
  \put(-71,0){\vertexvn}
    \put(-35,0){\dottedline{4}(1,0)(28,0)}\put(7,0){\line(1,0){28}} \put(49,0){\line(1,0){28}}\put(91,0)
    {\dottedline{4}(1,0)(28,0)}
    \put(-71,0){\vertexvn}\put(126,0){\vertexbn}\put(168,0){\circle{14}}
             \put(-84,-20){\makebox(0,0){$1$}}
      \put(-84,20){\makebox(0,0){$1$}}
      \put(-42,14){\makebox(0,0){$2$}}
      \put(0,14){\makebox(0,0){$2$}} 
      \put(42,14){\makebox(0,0){$2$}} 
      \put(84,14){\makebox(0,0){$2$}}
       \put(126,14){\makebox(0,0){$2$}}
       \put(170,-20){\makebox(0,0){$1$}}
       \put(170,20){\makebox(0,0){$1$}}
         \qbezier(-93.5,0)(-92, -15)(-82, -23)
        \qbezier(-93.5,0)(-92, 15)(-82, 23)
\put(-82,23){\vector( 1, 1){0}}
\put(-82,-23){\vector( 1, -1){0}}
 \end{picture}
 \end{displaymath}
\vspace{1cm} 

\begin{displaymath}
 \begin{picture}(80,20) \thicklines
     \put(-42,0){\circle{14}}   \put(0,0){\circle{14}}\put(0,0){\circle{7}} \put(42,0){\greydot} \put(84,0){\circle{14}} \put(126,0){\circle{14}}\put(126,0){\circle{7}}
  \put(-71,0){\vertexvn}
    \put(-35,0){\dottedline{4}(1,0)(28,0)}\put(7,0){\line(1,0){28}} \put(49,0){\line(1,0){28}}\put(91,0)
    {\dottedline{4}(1,0)(28,0)}
    \put(-71,0){\vertexvn}\put(126,0){\vertexbn}\put(168,0){\circle{14}}
             \put(-84,-20){\makebox(0,0){$1$}}
      \put(-84,20){\makebox(0,0){$1$}}
      \put(-42,14){\makebox(0,0){$2$}}
      \put(0,14){\makebox(0,0){$2$}} 
      \put(42,14){\makebox(0,0){$2$}} 
      \put(84,14){\makebox(0,0){$2$}}
       \put(126,14){\makebox(0,0){$2$}}
       \put(170,-20){\makebox(0,0){$1$}}
       \put(170,20){\makebox(0,0){$1$}}
     \end{picture}
 \end{displaymath}
\vspace{1cm} 


 The Vogan diagrams of   $\mathfrak{sl}(2m|2n)^{(2)}$ are
\begin{displaymath}
 \begin{picture}(80,20) \thicklines
     \put(-42,0){\circle{14}}   \put(0,0){\circle{14}} \put(42,0){\greydot} \put(84,0){\circle{14}} \put(126,0){\circle{14}}
  \put(-71,0){\vertexvn}
    \put(-35,0){\dottedline{4}(1,0)(28,0)}\put(7,0){\line(1,0){28}} \put(49,0){\line(1,0){28}}\put(91,0)
    {\dottedline{4}(1,0)(28,0)}
   \put(126,0){\vertexdn}
             \put(-84,-20){\makebox(0,0){$1$}}
      \put(-84,20){\makebox(0,0){$1$}}
      \put(-42,14){\makebox(0,0){$2$}}
      \put(0,14){\makebox(0,0){$2$}} 
      \put(42,14){\makebox(0,0){$2$}} 
      \put(84,14){\makebox(0,0){$2$}}
       \put(126,14){\makebox(0,0){$2$}}
       \put(170,-20){\makebox(0,0){$1$}}
       \put(170,20){\makebox(0,0){$1$}}
 \end{picture}
 \end{displaymath}
\vspace{1cm} 

\begin{displaymath}
 \begin{picture}(80,20) \thicklines
   \put(-42,0){\circle{7}}    \put(-42,0){\circle{14}}   \put(0,0){\circle{14}} \put(42,0){\greydot}
   \put(84,0){\circle{14}}  \put(84,0){\circle{7}} \put(126,0){\circle{14}}
 
 \put(-71,0){\vertexvn}
    \put(-35,0){\dottedline{4}(1,0)(28,0)}\put(7,0){\line(1,0){28}} \put(49,0){\line(1,0){28}}\put(91,0)
    {\dottedline{4}(1,0)(28,0)}
   \put(126,0){\vertexdn}
      
       \put(-84,-20){\makebox(0,0){$1$}}
      \put(-84,20){\makebox(0,0){$1$}}
      \put(-42,14){\makebox(0,0){$2$}}
      \put(0,14){\makebox(0,0){$2$}} 
      \put(42,14){\makebox(0,0){$2$}} 
      \put(84,14){\makebox(0,0){$2$}}
       \put(126,14){\makebox(0,0){$2$}}
       \put(170,-20){\makebox(0,0){$1$}}
       \put(170,20){\makebox(0,0){$1$}}
 \end{picture}
 \end{displaymath}
\vspace{1cm} 
\begin{displaymath}
 \begin{picture}(80,20) \thicklines
     \put(-42,0){\circle{14}}   \put(0,0){\circle{14}} \put(42,0){\greydot} \put(84,0){\circle{14}} \put(126,0){\circle{14}}
 
 \put(-71,0){\vertexvn}
    \put(-35,0){\dottedline{4}(1,0)(28,0)}\put(7,0){\line(1,0){28}} \put(49,0){\line(1,0){28}}\put(91,0)
    {\dottedline{4}(1,0)(28,0)}
   \put(126,0){\vertexdn}
      
      \qbezier(176.5, 0)(175, -15)(165, -23)
        \qbezier(176.5, 0)(175, 15)(165, 23)
\put(165,23){\vector( -1, 1){0}}
\put(165,-23){\vector( -1, -1){0}}
        \qbezier(-93.5,0)(-92, -15)(-82, -23)
        \qbezier(-93.5,0)(-92, 15)(-82, 23)
\put(-82,23){\vector( 1, 1){0}}
\put(-82,-23){\vector( 1, -1){0}}

       \put(-90,-20){\makebox(0,0){$1$}}
      \put(-90,20){\makebox(0,0){$1$}}
      \put(-42,14){\makebox(0,0){$2$}}
      \put(0,14){\makebox(0,0){$2$}} 
      \put(42,14){\makebox(0,0){$2$}} 
      \put(84,14){\makebox(0,0){$2$}}
       \put(126,14){\makebox(0,0){$2$}}
       \put(175,-20){\makebox(0,0){$1$}}
       \put(175,20){\makebox(0,0){$1$}}
 \end{picture}
 \end{displaymath}

 \bigskip 
\bigskip 

 The Vogan diagrams of   $\mathfrak{sl}(2m+1|2n)^{2}$ are
\begin{displaymath}
 \begin{picture}(80,20) \thicklines
   \put(-84,0){\circle{14}}   \put(-42,0){\circle{14}}   \put(0,0){\circle{14}} \put(42,0){\greydot}
   \put(84,0){\circle{14}} \put(126,0){\circle{14}}

    \put(-84,0){\vertexcn}\put(-35,0){\dottedline{4}(1,0)(28,0)}\put(7,0){\line(1,0){28}} \put(49,0){\line(1,0){28}}\put(91,0)
    {\dottedline{4}(1,0)(28,0)}
   \put(126,0){\vertexdn}

      \put(-84,14){\makebox(0,0){$1$}}
      \put(-42,14){\makebox(0,0){$2$}}
      \put(0,14){\makebox(0,0){$2$}} 
      \put(42,14){\makebox(0,0){$2$}} 
      \put(84,14){\makebox(0,0){$2$}}
       \put(126,14){\makebox(0,0){$2$}}
       \put(175,-20){\makebox(0,0){$1$}}
       \put(175,20){\makebox(0,0){$1$}}
 \end{picture}
 \end{displaymath}
 \vspace{1cm}
 \begin{displaymath}
 \begin{picture}(80,20) \thicklines
   \put(-84,0){\circle{14}}  \put(-84,0){\circle{7}}  \put(-42,0){\circle{14}}   \put(0,0){\circle{14}} \put(42,0){\greydot}
   \put(84,0){\circle{14}} \put(126,0){\circle{14}}\put(126,0){\circle{7}}

    \put(-84,0){\vertexcn}\put(-35,0){\dottedline{4}(1,0)(28,0)}\put(7,0){\line(1,0){28}} \put(49,0){\line(1,0){28}}\put(91,0)
    {\dottedline{4}(1,0)(28,0)}
   \put(126,0){\vertexdn}

      \put(-84,14){\makebox(0,0){$1$}}
      \put(-42,14){\makebox(0,0){$2$}}
      \put(0,14){\makebox(0,0){$2$}} 
      \put(42,14){\makebox(0,0){$2$}} 
      \put(84,14){\makebox(0,0){$2$}}
       \put(126,14){\makebox(0,0){$2$}}
       \put(175,-20){\makebox(0,0){$1$}}
       \put(175,20){\makebox(0,0){$1$}}
 \end{picture}
 \end{displaymath}
 \vspace{1cm}
\begin{displaymath}
 \begin{picture}(80,20) \thicklines
   \put(-84,0){\circle{14}}   \put(-42,0){\circle{14}}   \put(0,0){\circle{14}} \put(42,0){\greydot}
   \put(84,0){\circle{14}} \put(126,0){\circle{14}}

    \put(-84,0){\vertexcn}\put(-35,0){\dottedline{4}(1,0)(28,0)}\put(7,0){\line(1,0){28}} \put(49,0){\line(1,0){28}}\put(91,0)
    {\dottedline{4}(1,0)(28,0)}
   \put(126,0){\vertexdn}
      
          \qbezier(176.5, 0)(175, -15)(165, -23)
        \qbezier(176.5, 0)(175, 15)(165, 23)
\put(165,23){\vector( -1, 1){0}}
\put(165,-23){\vector( -1, -1){0}}

      \put(-84,14){\makebox(0,0){$1$}}
      \put(-42,14){\makebox(0,0){$2$}}
      \put(0,14){\makebox(0,0){$2$}} 
      \put(42,14){\makebox(0,0){$2$}} 
      \put(84,14){\makebox(0,0){$2$}}
       \put(126,14){\makebox(0,0){$2$}}
       \put(175,-20){\makebox(0,0){$1$}}
       \put(175,20){\makebox(0,0){$1$}}
 \end{picture}
 \end{displaymath}

\bigskip 
\bigskip

 The Vogan diagrams of   $\mathfrak{sl}(2m+1|2n+1)^{2}$ are
\begin{displaymath}
 \begin{picture}(80,20) \thicklines
   \put(-84,0){\circle{14}}   \put(-42,0){\circle{14}}   \put(0,0){\circle{14}} \put(42,0){\greydot} \put(84,0){\circle{14}}
   \put(126,0){\circle{14}}
 \put(168,0){\circle{14}} 
 
    \put(-84,0){\vertexcn}\put(-35,0){\dottedline{4}(1,0)(28,0)}\put(7,0){\line(1,0){28}} \put(49,0){\line(1,0){28}}\put(91,0)
    {\dottedline{4}(1,0)(28,0)}
  
      \put(126,0){\vertexbn}
      \put(-84,14){\makebox(0,0){$1$}}
      \put(-42,14){\makebox(0,0){$2$}}
      \put(0,14){\makebox(0,0){$2$}} 
      \put(42,14){\makebox(0,0){$2$}} 
      \put(84,14){\makebox(0,0){$2$}}
       \put(126,14){\makebox(0,0){$2$}}
       \put(170,14){\makebox(0,0){$1$}}
       
 \end{picture}
 \end{displaymath}
 
 \vspace{1cm}
 
 \begin{displaymath}
 \begin{picture}(80,20) \thicklines
   \put(-84,0){\circle{14}}   \put(-42,0){\circle{14}}   \put(0,0){\circle{14}} \put(0,0){\circle{7}} \put(42,0){\greydot} \put(84,0){\circle{14}}\put(84,0){\circle{7}}
   \put(126,0){\circle{14}}
 \put(168,0){\circle{14}} 
 
    \put(-84,0){\vertexcn}\put(-35,0){\dottedline{4}(1,0)(28,0)}\put(7,0){\line(1,0){28}} \put(49,0){\line(1,0){28}}\put(91,0)
    {\dottedline{4}(1,0)(28,0)}
  
      \put(126,0){\vertexbn}
      \put(-84,14){\makebox(0,0){$1$}}
      \put(-42,14){\makebox(0,0){$2$}}
      \put(0,14){\makebox(0,0){$2$}} 
      \put(42,14){\makebox(0,0){$2$}} 
      \put(84,14){\makebox(0,0){$2$}}
       \put(126,14){\makebox(0,0){$2$}}
       \put(170,14){\makebox(0,0){$1$}}
       
 \end{picture}
 \end{displaymath}
 
 \vspace{1cm}
 
  The Vogan diagrams of  $\mathfrak{sl}(2|2n+1)^{(2)}$ are 
\begin{displaymath}
 \begin{picture}(80,20) \thicklines
     \put(-42,0){\circle{14}}   \put(0,0){\circle{14}} \put(42,0){\greydot} \put(84,0){\circle{14}} \put(126,0){\circle{14}}
 
 \put(-71,0){\vertexvn}\put(126,0){\vertexbn}\put(168,0){\circle{14}}
    \put(-35,0){\dottedline{4}(1,0)(28,0)}\put(7,0){\line(1,0){28}} \put(49,0){\line(1,0){28}}\put(91,0)
    {\dottedline{4}(1,0)(28,0)}
   
      \put(-76,-15){\line(0,1){30}}\put(-68,-15){\line(0,1){30}}
      
              \put(-84,-20){\makebox(0,0){$1$}}
      \put(-84,20){\makebox(0,0){$1$}}
      \put(-42,14){\makebox(0,0){$2$}}
      \put(0,14){\makebox(0,0){$2$}} 
      \put(42,14){\makebox(0,0){$2$}} 
      \put(84,14){\makebox(0,0){$2$}}
       \put(126,14){\makebox(0,0){$2$}}
       \put(170,-20){\makebox(0,0){$$}}
       \put(170,14){\makebox(0,0){$1$}}
 \end{picture}
 \end{displaymath}
 \vspace{1cm}
 
 \begin{displaymath}
 \begin{picture}(80,20) \thicklines
     \put(-42,0){\circle{14}}   \put(0,0){\circle{14}} \put(42,0){\greydot} \put(84,0){\circle{14}} \put(126,0){\circle{14}}
 
 \put(-71,0){\vertexvn}\put(126,0){\vertexbn}\put(168,0){\circle{14}}
    \put(-35,0){\dottedline{4}(1,0)(28,0)}\put(7,0){\line(1,0){28}} \put(49,0){\line(1,0){28}}\put(91,0)
    {\dottedline{4}(1,0)(28,0)}
   
      \put(-76,-15){\line(0,1){30}}\put(-68,-15){\line(0,1){30}}
      
              \put(-84,-20){\makebox(0,0){$1$}}
      \put(-84,20){\makebox(0,0){$1$}}
      \put(-42,14){\makebox(0,0){$2$}}
      \put(0,14){\makebox(0,0){$2$}} 
      \put(42,14){\makebox(0,0){$2$}} 
      \put(84,14){\makebox(0,0){$2$}}
       \put(126,14){\makebox(0,0){$2$}}
       \put(170,-20){\makebox(0,0){$$}}
       \put(170,14){\makebox(0,0){$1$}}
       \qbezier(-93.5,0)(-92, -15)(-82, -23)
        \qbezier(-93.5,0)(-92, 15)(-82, 23)
\put(-82,23){\vector( 1, 1){0}}
\put(-82,-23){\vector( 1, -1){0}}
 \end{picture}
 \end{displaymath}
 \vspace{1cm}
 
 \begin{displaymath}
 \begin{picture}(80,20) \thicklines
     \put(-42,0){\circle{14}}   \put(0,0){\circle{14}} \put(0,0){\circle{7}}\put(42,0){\greydot} \put(84,0){\circle{14}} \put(84,0){\circle{7}}\put(126,0){\circle{14}}
 
 \put(-71,0){\vertexvn}\put(126,0){\vertexbn}\put(168,0){\circle{14}}
    \put(-35,0){\dottedline{4}(1,0)(28,0)}\put(7,0){\line(1,0){28}} \put(49,0){\line(1,0){28}}\put(91,0)
    {\dottedline{4}(1,0)(28,0)}
   
      \put(-76,-15){\line(0,1){30}}\put(-68,-15){\line(0,1){30}}
      
              \put(-84,-20){\makebox(0,0){$1$}}
      \put(-84,20){\makebox(0,0){$1$}}
      \put(-42,14){\makebox(0,0){$2$}}
      \put(0,14){\makebox(0,0){$2$}} 
      \put(42,14){\makebox(0,0){$2$}} 
      \put(84,14){\makebox(0,0){$2$}}
       \put(126,14){\makebox(0,0){$2$}}
       \put(170,-20){\makebox(0,0){$$}}
       \put(170,14){\makebox(0,0){$1$}}
 \end{picture}
 \end{displaymath}
 \vspace{1cm}
 
  \begin{displaymath}
 \begin{picture}(80,20) \thicklines
     \put(-42,0){\circle{14}}   \put(0,0){\circle{14}} \put(0,0){\circle{7}}\put(42,0){\greydot} \put(84,0){\circle{14}} \put(84,0){\circle{7}}\put(126,0){\circle{14}}
 
 \put(-71,0){\vertexvn}\put(126,0){\vertexbn}\put(168,0){\circle{14}}
    \put(-35,0){\dottedline{4}(1,0)(28,0)}\put(7,0){\line(1,0){28}} \put(49,0){\line(1,0){28}}\put(91,0)
    {\dottedline{4}(1,0)(28,0)}
   
      \put(-76,-15){\line(0,1){30}}\put(-68,-15){\line(0,1){30}}
      
              \put(-84,-20){\makebox(0,0){$1$}}
      \put(-84,20){\makebox(0,0){$1$}}
      \put(-42,14){\makebox(0,0){$2$}}
      \put(0,14){\makebox(0,0){$2$}} 
      \put(42,14){\makebox(0,0){$2$}} 
      \put(84,14){\makebox(0,0){$2$}}
       \put(126,14){\makebox(0,0){$2$}}
       \put(170,-20){\makebox(0,0){$$}}
       \put(170,14){\makebox(0,0){$1$}}
       \qbezier(-93.5,0)(-92, -15)(-82, -23)
        \qbezier(-93.5,0)(-92, 15)(-82, 23)
\put(-82,23){\vector( 1, 1){0}}
\put(-82,-23){\vector( 1, -1){0}}
 \end{picture}
 \end{displaymath}
 \vspace{1cm}
 
\vspace{1cm}
 The Vogan diagrams of  $\mathfrak{sl}(2|2n)^{(2)}$ are 
\begin{displaymath}
 \begin{picture}(80,20) \thicklines
     \put(-42,0){\circle{14}}   \put(0,0){\circle{14}} \put(42,0){\greydot} \put(84,0){\circle{14}} \put(126,0){\circle{14}}
 
 \put(-71,0){\vertexvn}
    \put(-35,0){\dottedline{4}(1,0)(28,0)}\put(7,0){\line(1,0){28}} \put(49,0){\line(1,0){28}}\put(91,0)
    {\dottedline{4}(1,0)(28,0)}
   \put(126,0){\vertexdn}
      \put(-76,-15){\line(0,1){30}}\put(-68,-15){\line(0,1){30}}
      
              \put(-84,-20){\makebox(0,0){$1$}}
      \put(-84,20){\makebox(0,0){$1$}}
      \put(-42,14){\makebox(0,0){$2$}}
      \put(0,14){\makebox(0,0){$2$}} 
      \put(42,14){\makebox(0,0){$2$}} 
      \put(84,14){\makebox(0,0){$2$}}
       \put(126,14){\makebox(0,0){$2$}}
       \put(170,-20){\makebox(0,0){$1$}}
       \put(170,20){\makebox(0,0){$1$}}
 \end{picture}
 \end{displaymath}
 \vspace{1cm}

\begin{displaymath}
 \begin{picture}(80,20) \thicklines
     \put(-42,0){\circle{14}}   \put(0,0){\circle{14}} \put(42,0){\greydot} \put(84,0){\circle{14}} \put(126,0){\circle{14}}
 
 \put(-71,0){\vertexvn}
    \put(-35,0){\dottedline{4}(1,0)(28,0)}\put(7,0){\line(1,0){28}} \put(49,0){\line(1,0){28}}\put(91,0)
    {\dottedline{4}(1,0)(28,0)}
   \put(126,0){\vertexdn}
      \put(-76,-15){\line(0,1){30}}\put(-68,-15){\line(0,1){30}}
      
        \qbezier(176.5, 0)(175, -15)(165, -23)
        \qbezier(176.5, 0)(175, 15)(165, 23)
\put(165,23){\vector( -1, 1){0}}
\put(165,-23){\vector( -1, -1){0}}
        \qbezier(-93.5,0)(-92, -15)(-82, -23)
        \qbezier(-93.5,0)(-92, 15)(-82, 23)
\put(-82,23){\vector( 1, 1){0}}
\put(-82,-23){\vector( 1, -1){0}}
       \put(-84,-20){\makebox(0,0){$1$}}
      \put(-84,20){\makebox(0,0){$1$}}
      \put(-42,14){\makebox(0,0){$2$}}
      \put(0,14){\makebox(0,0){$2$}} 
      \put(42,14){\makebox(0,0){$2$}} 
      \put(84,14){\makebox(0,0){$2$}}
       \put(126,14){\makebox(0,0){$2$}}
       \put(170,-20){\makebox(0,0){$1$}}
       \put(170,20){\makebox(0,0){$1$}}
 \end{picture}
 \end{displaymath}
 \vspace{2cm}
 
The Vogan diagrams of   $\mathfrak{osp}(2m|2n)^{(2)}$ are
\begin{displaymath}
 \begin{picture}(80,20) \thicklines
   \put(-84,0){\circle{14}}   \put(-42,0){\circle{14}}   \put(0,0){\circle{14}} \put(42,0){\greydot} \put(84,0){\circle{14}}
   \put(126,0){\circle{14}}
 \put(168,0){\circle{14}} 
 
    \put(-84,0){\vertexcn}\put(-35,0){\dottedline{4}(1,0)(28,0)}\put(7,0){\line(1,0){28}} \put(49,0){\line(1,0){28}}\put(91,0)
    {\dottedline{4}(1,0)(28,0)}
  
      \put(126,0){\vertexbn}
      \put(-84,14){\makebox(0,0){$1$}}
      \put(-42,14){\makebox(0,0){$1$}}
      \put(0,14){\makebox(0,0){$1$}} 
      \put(42,14){\makebox(0,0){$1$}} 
      \put(84,14){\makebox(0,0){$1$}}
       \put(126,14){\makebox(0,0){$1$}}
       \put(170,14){\makebox(0,0){$1$}}
       
 \end{picture}
 \end{displaymath}
 
\vspace{1cm}

 \hspace{7cm} \vdots
 
 \begin{displaymath}
 \begin{picture}(80,20) \thicklines
   \put(-84,0){\circle{14}}   \put(-42,0){\circle{14}}   \put(0,0){\circle{14}} \put(0,0){\circle{7}}\put(42,0){\greydot} \put(84,0){\circle{14}}
   \put(126,0){\circle{14}} \put(126,0){\circle{7}}
 \put(168,0){\circle{14}} 
 
    \put(-84,0){\vertexcn}\put(-35,0){\dottedline{4}(1,0)(28,0)}\put(7,0){\line(1,0){28}} \put(49,0){\line(1,0){28}}\put(91,0)
    {\dottedline{4}(1,0)(28,0)}
  
      \put(126,0){\vertexbn}
      \put(-84,14){\makebox(0,0){$1$}}
      \put(-42,14){\makebox(0,0){$1$}}
      \put(0,14){\makebox(0,0){$1$}} 
      \put(42,14){\makebox(0,0){$1$}} 
      \put(84,14){\makebox(0,0){$1$}}
       \put(126,14){\makebox(0,0){$1$}}
       \put(170,14){\makebox(0,0){$1$}}
       
 \end{picture}
 \end{displaymath}
 
\vspace{1cm}

 The lowest weight representation $\mathcal{G}_{1}$ of  $\mathcal{G}_{0}$  is $-\delta_{1}$ and that makes
 the following Dynkin diagram for  $\mathfrak{osp}(2|2n)^{2}$.
  The Vogan diagrams of   $\mathfrak{osp}(2|2n)^{(2)}$ are
\begin{displaymath}
 \begin{picture}(80,20) \thicklines
   \put(-84,0){\circle*{14}}   \put(-42,0){\circle{14}}   \put(0,0){\circle{14}} \put(38,0){\dottedline{4}(1,0)(28,0)}
   \put(84,0){\circle{14}}
   \put(126,0){\circle{14}}
 \put(168,0){\circle*{14}} 
 
    \put(-84,0){\vertexcn}\put(-35,0){\dottedline{4}(1,0)(28,0)}\put(7,0){\line(1,0){28}} \put(49,0){\line(1,0){28}}\put(91,0)
    {\dottedline{4}(1,0)(28,0)}
  
      \put(126,0){\vertexbn}
      \put(-84,14){\makebox(0,0){$1$}}
      \put(-42,14){\makebox(0,0){$2$}}
      \put(0,14){\makebox(0,0){$2$}} 
      \put(42,14){\makebox(0,0){$$}} 
      \put(84,14){\makebox(0,0){$2$}}
       \put(126,14){\makebox(0,0){$2$}}
       \put(170,14){\makebox(0,0){$2$}}
       
 \end{picture}
 \end{displaymath}
 
\vspace{1cm}

  \hspace{7cm} \vdots

 \begin{displaymath}
 \begin{picture}(80,20) \thicklines
   \put(-84,0){\circle*{14}}   \put(-42,0){\circle{14}}   \put(0,0){\circle{14}}\put(0,0){\circle{7}} \put(38,0){\dottedline{4}(1,0)(28,0)}
   \put(84,0){\circle{14}}
   \put(126,0){\circle{14}}
 \put(168,0){\circle*{14}} 
 
    \put(-84,0){\vertexcn}\put(-35,0){\dottedline{4}(1,0)(28,0)}\put(7,0){\line(1,0){28}} \put(49,0){\line(1,0){28}}\put(91,0)
    {\dottedline{4}(1,0)(28,0)}
  
      \put(126,0){\vertexbn}
      \put(-84,14){\makebox(0,0){$1$}}
      \put(-42,14){\makebox(0,0){$2$}}
      \put(0,14){\makebox(0,0){$2$}} 
      \put(42,14){\makebox(0,0){$$}} 
      \put(84,14){\makebox(0,0){$2$}}
       \put(126,14){\makebox(0,0){$2$}}
       \put(170,14){\makebox(0,0){$2$}}
       
 \end{picture}
 \end{displaymath}
 
\vspace{1cm}
 
  The lowest weight representation $\mathcal{G}_{1}$ of  $\mathcal{G}_{0}$ is $-\delta_{1}$ and that makes
 the following Dynkin diagram for  $\mathfrak{sl}(1|2n+1)^{(4)}$.
  The Vogan diagrams of   $\mathfrak{sl}(1|2n+1)^{4}$ are
\begin{displaymath}
 \begin{picture}(80,20) \thicklines
   \put(-84,0){\circle{14}}   \put(-42,0){\circle{14}}   \put(0,0){\circle{14}} \put(38,0){\dottedline{4}(1,0)(28,0)}
   \put(84,0){\circle{14}}
   \put(126,0){\circle{14}}
 \put(168,0){\circle*{14}} 
 
    \put(-84,0){\vertexcn}\put(-35,0){\dottedline{4}(1,0)(28,0)}\put(7,0){\line(1,0){28}} \put(49,0){\line(1,0){28}}\put(91,0)
    {\dottedline{4}(1,0)(28,0)}
  
      \put(126,0){\vertexbn}
      \put(-84,14){\makebox(0,0){$1$}}
      \put(-42,14){\makebox(0,0){$1$}}
      \put(0,14){\makebox(0,0){$1$}} 
      \put(42,14){\makebox(0,0){$$}} 
      \put(84,14){\makebox(0,0){$1$}}
       \put(126,14){\makebox(0,0){$1$}}
       \put(170,14){\makebox(0,0){$1$}}
       
 \end{picture}
 \end{displaymath}
 
\vspace{1cm}
 
  \hspace{7cm} \vdots
 
 \begin{displaymath}
 \begin{picture}(80,20) \thicklines
   \put(-84,0){\circle{14}}   \put(-42,0){\circle{14}}   \put(0,0){\circle{14}} \put(38,0){\dottedline{4}(1,0)(28,0)}
   \put(84,0){\circle{14}}\put(84,0){\circle{7}}
   \put(126,0){\circle{14}}
 \put(168,0){\circle*{14}} 
 
    \put(-84,0){\vertexcn}\put(-35,0){\dottedline{4}(1,0)(28,0)}\put(7,0){\line(1,0){28}} \put(49,0){\line(1,0){28}}\put(91,0)
    {\dottedline{4}(1,0)(28,0)}
  
      \put(126,0){\vertexbn}
      \put(-84,14){\makebox(0,0){$1$}}
      \put(-42,14){\makebox(0,0){$1$}}
      \put(0,14){\makebox(0,0){$1$}} 
      \put(42,14){\makebox(0,0){$$}} 
      \put(84,14){\makebox(0,0){$1$}}
       \put(126,14){\makebox(0,0){$1$}}
       \put(170,14){\makebox(0,0){$1$}}
       
 \end{picture}
 \end{displaymath}
 
\vspace{1cm}

\end{document}